\newcommand{\nfrac}{\nicefrac}
\newcommand{\defeq}{\stackrel{\mathrm{def}}=}
\newcommand{\poly}{\mathrm{poly}}
\newcommand{\Paren}[1]{\left(#1 \right )}
\newcommand{\Brac}[1]{\left[#1 \right]}
\newcommand{\set}[1]{\{#1\}}
\newcommand{\Set}[1]{\left\{#1\right\}}
\newcommand{\abs}[1]{\lvert#1\rvert}
\newcommand{\Abs}[1]{\left\lvert#1\right\rvert}
\newcommand{\norm}[1]{\lVert#1\rVert}
\definecolor{DSred}{rgb}{1,0,0}
\newcommand{\Authornote}[2]{{\small\textcolor{DSred}{\sf$<<<${  #1: #2 }$>>>$}}}
\renewcommand{\leq}{\leqslant}
\renewcommand{\geq}{\geqslant}
\renewcommand{\ge}{\geqslant}
\renewcommand{\le}{\leqslant}
\renewcommand{\epsilon}{\varepsilon}
\newcommand{\eps}{\epsilon}
\declaretheorem[within=section]{theorem}
\declaretheorem[sibling=theorem]{lemma}
\declaretheorem[sibling=theorem]{claim}
\declaretheorem[sibling=theorem]{definition}
\declaretheorem[sibling=theorem]{remark}
\newcounter{termcounter}
\renewcommand{\thetermcounter}{\Alph{termcounter}}
\crefname{term}{term}{terms}
\def\term{\@ifnextchar[\term@optarg\term@noarg}%]
\def\term@optarg[#1]#2{%
  \textup{(#1)}%
  \def\@currentlabel{#1}%
  \def\cref@currentlabel{[][2147483647][]#1}%
  \cref@label[term]{#2}}
\def\term@noarg#1{%
  \refstepcounter{termcounter}%
  \textup{(\thetermcounter)}%
  \cref@label[term]{#1}}
\newcommand{\R}{\mathbb{R}}
\newcommand{\C}{\mathbb{C}}
\newcommand{\N}{\mathbb{N}}
\newcommand{\cA}{\mathcal A}
\newcommand{\cB}{\mathcal B}
\newcommand{\Esymb}{{\bf E}}
\newcommand{\Isymb}{{\bf I}}
\newcommand{\Psymb}{{\bf P}}
\DeclareMathOperator*{\E}{\Esymb}
\DeclareMathOperator*{\ProbOp}{\Psymb}
\DeclareMathOperator*{\I}{\Isymb}
\renewcommand{\Pr}{\ProbOp}
\newcommand{\ComplexityFont}[1]{\ensuremath{\mathsf{#1}}}
\newcommand{\NP}{\ComplexityFont{NP}}
\newcommand{\BPP}{\ComplexityFont{BPP}}
\newcommand{\HashP}{\ComplexityFont{\#P}}        % Complexity Classes
\newcommand{\ptest}{\ensuremath{\mathsf{PTest}}\xspace}
\newcommand{\reject}{\texttt{Reject}\xspace}
\newcommand{\lintest}{$\mathsf{LinearityTest}$}
\newcommand{\tailtest}{$\mathsf{TailTest}$}
\newcommand{\oracle}{\ensuremath{\mathcal{O}}}
\newcommand{\stack}{\ensuremath{\set{{\oracle}_k}_{\set{k \leq n}}}\xspace}
\newcommand{\Perm}{\ensuremath{\mathsf{Per}}}
\newcommand{\per}{\Perm}
\newcommand{\perm}{\Perm}
\newcommand{\err}{\mathsf{err}_{2}}
\newcommand{\serr}{\mathsf{err}_{2,\eta}}
\newcommand{\etanorm}[2]{\ensuremath{\norm{#1}_{2,#2}}}
\newcommand{\cgaussian}% 
 {\ensuremath{\mathcal{N}(0,1)_{\C}^{k \times k}}}
\newcommand{\Babs}[1]{\Big\lvert#1\Big\rvert}
\newcommand{\BBrac}[1]{\Big[#1\Big]}
\newcommand{\BNorm}[1]{\Big\lVert#1\Big\rVert}
\newcommand{\ignore}[1]{}
\title{Testing Permanent Oracles -- Revisited}
\author{Sanjeev Arora\thanks{Princeton University, Computer Science
    Department and Center for Computational Intractability. This work
    is supported by the NSF grants CCF-0832797 and
    CCF-1117309. Email:~\texttt{\{arora,rajsekar,sachdeva\} @cs.princeton.edu}.}
  \and Arnab Bhattacharyya\thanks{Princeton University, Computer Science
    Department and Center for Computational Intractability. This work
    is supported by NSF Grants CCF-0832797,
    0830673, and 0528414. Email: \texttt{arnabb@cs.princeton.edu}.}
  \and Rajsekar Manokaran\footnotemark[1] \and Sushant
  Sachdeva\footnotemark[1]} \date{}
\newcommand{\Rnote}[1]{\Authornote{Rajsekar}{#1}}
\newcommand{\Anote}[1]{\Authornote{Arnab}{#1}}
\begin{document}
\maketitle

\begin{abstract}
Suppose we are given an oracle that claims to approximate the
permanent for most matrices $X$, where $X$ is  chosen from the
  {\em  Gaussian ensemble} (the  matrix entries  are i.i.d.~univariate
  complex Gaussians).  Can we test that the  oracle satisfies this
  claim?  This paper gives a polynomial-time algorithm for the task.

  The oracle-testing problem is of interest because a recent paper of
  Aaronson  and Arkhipov  showed that  if there  is  a polynomial-time
  algorithm  for   simulating  boson-boson  interactions   in  quantum
  mechanics, then  an approximation oracle  for the permanent  (of the
  type  described   above)  exists  in   $\BPP^\NP$.  Since  computing
  the permanent of  even $0/1$ matrices is  $\HashP$-complete, this seems
  to demonstrate  more computational  power in quantum  mechanics than
  Shor's factoring algorithm does. However, unlike factoring, which is
  in \NP, it was unclear previously how to test  the   correctness
  of  an  approximation   oracle  for  the   permanent, and this is
  the contribution of the paper. 

  The technical difficulty overcome here is that univariate polynomial
  self-correction, which underlies similar oracle-testing algorithms
  for permanent over {\em finite fields} ---and whose discovery led to
  a  revolution in complexity  theory---does not  seem to  generalize to
  complex (or even, real) numbers.   We believe that
  this  tester will  motivate further  progress on  understanding the
  permanent of Gaussian matrices.
\end{abstract}

\section{Introduction}
The permanent of an $n$-by-$n$ matrix
$X = (x_{i,j})$ is defined as
$$\Perm(X) = \sum_\pi \prod_{i=1}^n x_{i,\pi(i)},$$
where $\pi$ ranges over all permutations from $[n]$ to $[n]$.
A recent paper of Aaronson and Arkhipov \cite{AaronsonA11}~(henceforth
referred to as AA) introduced a surprising connection between quantum
computing and the complexity of computing the permanent (which is
well-known to be \HashP-complete to compute in the worst
case~\cite{Valiant77}). They define and study a formal model of
quantum computation with non-interacting bosons in which $n$ bosons
pass through a ``circuit'' consisting of optical elements. Each boson
starts out in one of $m$ different phases and, at the end of the
experiment, the system is in a superposition of the basis states---one for each
possible partition of the $n$ bosons into $m$ phases.

AA proceed to show that if there is an efficient classical randomized
algorithm $\cA$ that simulates the experiment, in the sense of being
able to output random samples from the final distribution (up to a
small error in total variation distance) of the Bosonic states at the
end of the experiment, then there is a way to design an {\em
 approximation} algorithm $\cB$ in $\BPP^\NP$ for the permanent
problem for an interesting family of random matrices. The random
matrices are drawn from the Gaussian ensemble---each entry is an
independent standard Gaussian complex number---and the algorithm
computes an additive approximation, in the sense that,
\begin{equation}
 \Abs{\cB(X) - \Perm(X)}^2 \le \delta^2 n!,\label{eq:add-approx}
\end{equation} 
for at least a fraction $1 - \eta$ of the input matrices $X$. (Note
that the {\em variance} of $\Perm(X)$ is $n!$ for Gaussian ensembles,
so this approximation is nontrivial.) The running time of $\cB$ is
$\poly(n, \nfrac{1}{\delta}, \nfrac{1}{\eta})$ with access to an
oracle in $\NP^\cA$. In other words, $\cB \in \BPP^{\NP^\cA}$ for
$\eta, \delta = \Omega\Paren{\nfrac{1}{\poly(n)}}$ (refer to Problem $2$ and
Theorem $3$ in \cite{AaronsonA11}). The authors go on to conjecture that
obtaining an additive approximation as in \cref{eq:add-approx} is
$\HashP$-hard (this follows from Conjectures $5$ and $6$, and Theorem
$7$ in \cite{AaronsonA11}). If true, this conjecture has surprising
implications for the computational power of quantum systems. By
contrast, the crown jewel of quantum computing, Shor's
algorithm~\cite{Shor94}, implies that the ability to simulate quantum
systems would allow us to factor integers in polynomial time, but
factoring (as well as other problems known to be in BQP) is not even
known to be $\NP$-Hard.

As evidence for their conjecture, Arkhipov and Aaronson point to
related facts about the permanent problem for matrices over integers
and finite fields. It is
known that that if there is a constant factor approximation algorithm
for computing $\Perm(X)$ where $X$ is an arbitrary matrix of integers,
then one can solve $\HashP$ problems in polynomial time. Thus,
approximation on {\em all} inputs seems difficult\footnote{Note that
 approximating the permanent is known to be feasible for the special
 case of non-negative real matrices \cite{Broder,JerrumSinclair,JerrumSinclairVigoda}.}. Likewise, starting with a paper of
Lipton, researchers have studied the complexity of computing the
permanent (exactly) for {\em many matrices}. For example, given an
algorithm that computes the permanent exactly for $1/\poly(n)$
fraction of all matrices $X$ over a finite field $GF(p)$ (where $p$ is
a sufficiently large prime), one can use self-correction procedures
for univariate polynomials~\cite{Gem,GemSud,Cai} to again obtain
efficient randomized algorithms for \HashP-hard problems.

Thus, either restriction ---approximation on all matrices, or the
ability to compute exactly on a significant fraction of matrices---
individually results in a \HashP-hard problem. What makes the AA
conjecture interesting is that it involves the {conjunction} of the
two restrictions: the oracle in question {\em approximates} the value
of the permanent for {\em most} matrices.

The focus of the current paper is the following question: {\em given
 an additive approximation oracle for permanents of Gaussian
 matrices ($\cB$ in 
\cref{eq:add-approx} above), how can we test that the oracle is
correct? } We want a tester that accepts with high probability when
$\cB$ satisfies the condition in \cref{eq:add-approx} and rejects with
high probability when $\cB$ does not approximate well on a substantial
fraction of inputs. Note that the testing problem is a non-issue for
previous quantum algorithms such as Shor's algorithm, since the
correctness of a factoring algorithm is easy to test.

The testing question has been studied for the permanent problem over
finite fields. Given an oracle that supposedly computes $\perm(\cdot)$
for even, say, $\nfrac{3}{4}^{\textrm{th}}$ of the matrices over
$GF(p)$, one can verify this claim using self-correction for
polynomials over finite fields and the {\em downward
  self-reducibility} of $\perm(\cdot)$, as described below in more
detail in Section \ref{sec:related}. (In fact, if the oracle satisfies
the claim, then one can compute $\perm(\cdot)$ on all matrices with
high probability.)  However, as noted in AA, these techniques that
work over finite fields fail badly over the complex numbers. The
authors in AA also seem to suggest that techniques
analogous to self-correction and downward self-reducibility can be
generalized to complex numbers in some way, but this remains open.

In this paper, we solve the testing problem using downward
self-reducibility alone. Perhaps this gives some weak evidence for
the truth of the AA conjecture. Note that since we lack
self-correction techniques, we do not get an oracle at the end that
computes the permanent for all matrices as in the finite field case.
Incidentally, an argument similar to the one presented in this paper
works in the finite field case also, giving an alternate tester for
the permanent that does not use self-correction of polynomials
over finite fields. 
\subsection{Related Work}\label{sec:related}
As mentioned above, testing an oracle for the permanent over finite fields has been
extensively studied. The approach, basically arising from
\cite{LFKN}, uses self-correction of polynomials over finite fields
and downward self-reducibility of the permanent. Let us revisit the
argument.

Suppose we are given a sequence of oracles $\Set{\oracle_k}_k,$ where
for each $k$, $\oracle_k$ allegedly computes the permanent for a
$9/10$ fraction of all $k$-by-$k$ matrices over the field. The
argument proceeds by first applying a self-correction procedure for
low-degree polynomials (see \cite{GemSud}), noting that the permanent
is a $k$-degree multilinear polynomial in the $k^2$ entries of the
matrix, treated as variables.

The correction procedure, on input $X,$ queries $\oracle_k$ at
$\poly(n)$ points, and outputs the correct value of $\perm(X)$ with $1
- \exp(-n)$ probability (over the coin tosses of the procedure). Thus,
the procedure acts as a proxy for the oracle, providing
$\Set{\oracle^\star_k}_k$ which can now be tested for mutual
consistency using the downward self-reducibility of the permanent:
\begin{equation}
 \perm(X) = {\textstyle \sum_j} x_{1,j} \cdot \perm(X_{j}).\label{eq:self-red}
\end{equation}
Here, $X_j$ is the submatrix formed by removing the first row and
$j^{\textrm{th}}$ column. Finally, since $\oracle_1$ can be verified
by direct computation, this procedure tests and accepts sequences
where $\oracle_k$ computes the permanent of a fraction $9/10$ of all
$k \times k$ matrices; while rejecting sequences of oracles where for
some $k$, $\oracle_k(X) \neq \per_k(X)$ on more than, say a fraction
$3/10$, of the inputs.

A natural attempt to port this argument to real/complex gaussian
matrices runs into fatal issues with the self-correction procedures:
since the oracles are only required to {\em approximate} the value of
the permanent, a polynomial interpolation procedure incurs an
exponential (in the degree) blow-up in the error at the point of
interest (see \cite{Arora-Khot}). In our work, we circumvent
polynomial interpolation and only deal with self-reducibility, noting
that \cref{eq:self-red} expresses the permanent as a {\em linear}
function of permanent of smaller matrices.

\subsection{Overview of the Tester}
%
% We work with the following notion of quality of an oracle, naturally
% inspired by the AA conjecture: the worst approximation the oracle
% makes on any input, {\em discounting a small fraction of the inputs}.
We work with the following notion of quality of an oracle, naturally
inspired by the AA conjecture: the approximation guarantee achieved by the oracle
on all but a small fraction of the inputs.
\begin{definition} For an integer $n$, an oracle $\oracle_n:
  \C^{n\times n} \to \C$, is said to be $(\delta, \eta)$-good, if, an
  $n \times n$ matrix $X$ sampled from the Gaussian ensemble satisfies
  $\Abs{\oracle_n(X) - \perm_n(X)}^2 \leq \delta^2 n!,$ with probability
  at least $1-\eta$ over the sample.
\end{definition}

Note that since the tester is required to be efficient, we
(necessarily) allow even good oracles to answer arbitrarily on a small
fraction of inputs, because the tester will not encounter these bad
inputs with high probability. As an aside, there is also the issue of
{\em additive} vs {\em multiplicative} approximation, which AA
conjecture have similar complexity. In this paper, we stick with
additive approximation as defined above.

Our main result is stated
informally below~(see Theorem~\ref{thm:main} for a precise statement).
\begin{theorem}[Main theorem -- informal]\label{thm:main-informal}
  There exists an algorithm $\cA$ that, given a positive integer $n,$
  an error parameter{\footnote{All of the $\poly(\cdot)$ are fixed polynomials, hidden
      for clarity}} $\delta \geq \nfrac{1}{\poly(n)}$, and access to
  oracles $\{\oracle_k\}_{1\leq k\leq n}$ such that $\oracle_k : \C^{k^2} \to
  \C$, has the following behavior:
\begin{itemize}
\item If for every $k \leq n$, the oracle $\oracle_k$ is
 $\left(\delta, \nfrac{1}{\poly(n)}\right)$-good, then $\cA$ accepts
 with probability at least $1-\nfrac{1}{\poly(n)}$.
\item If there exists a $k \leq n$ such that the oracle $\oracle_k$ is
 not even $\Paren{ \poly(n)\cdot \delta,
 \nfrac{1}{\poly \Paren{n}} }$-good, then $\cA$ rejects with
 probability at least $1-\nfrac{1}{\poly(n)}$.
\item
 The query complexity as well as the time complexity of $\cA$ is
 $\poly(n/\delta)$. 
\end{itemize}
\end{theorem}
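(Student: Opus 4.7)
The plan is to build the tester directly from the downward self-reducibility identity~\cref{eq:self-red}, avoiding univariate polynomial self-correction entirely. For each level $k = 2,\ldots,n$, the tester repeats $T = \poly(n,1/\delta)$ times: sample $X$ from the $k\times k$ complex Gaussian ensemble, query $\oracle_k(X)$ together with $\oracle_{k-1}(X_j)$ for all $j=1,\ldots,k$, and reject if
$$\Bigabs{\oracle_k(X) - \sum_{j=1}^{k} x_{1,j}\,\oracle_{k-1}(X_j)}^{2} > C\,\delta_k^{2}\,k!,$$
for suitable thresholds $\delta_k$ and an absolute constant $C$. The base oracle $\oracle_1$ is checked by direct evaluation, since $\perm$ of a $1\times 1$ matrix is trivial. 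The crucial structural observation enabling the analysis is that each submatrix $X_j$ uses only rows $2,\ldots,k$ of $X$, so the whole tuple $(X_1,\ldots,X_k)$ is independent of the first row $(x_{1,1},\ldots,x_{1,k})$, while each $X_j$ is marginally a $(k-1)\times(k-1)$ Gaussian matrix.

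For completeness, suppose every $\oracle_k$ is $(\delta,1/\poly(n))$-good. By a union bound, all $X_j$ are good points for $\oracle_{k-1}$ except with probability $k/\poly(n)$, in which case the errors $e_j \defeq \oracle_{k-1}(X_j) - \perm(X_j)$ satisfy $|e_j|^2 \leq \delta^2(k-1)!$. Conditioning on the $X_j$'s, the random variable $\sum_j x_{1,j} e_j$ is a linear combination of independent complex Gaussians with variance $\sum_j |e_j|^2 \leq k\cdot\delta^2(k-1)! = \delta^2 k!$, so Gaussian concentration controls it at the desired scale. Combining with the exact identity $\perm(X) = \sum_j x_{1,j}\perm(X_j)$ and the assumed bound $|\oracle_k(X)-\perm(X)|^2 \leq \delta^2 k!$, the triangle inequality yields that the test passes with high probability.

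For soundness, I induct from $k = 1$ upward on the statement ``$\oracle_k$ is $(\delta_k,\eta_k)$-good'', with $\delta_k$ and $1/\eta_k$ staying within $\poly(n)$. The base case is immediate from direct verification. For the inductive step, the same independence–concentration calculation shows that if $\oracle_{k-1}$ is $(\delta_{k-1},\eta_{k-1})$-good, then $\sum_j x_{1,j}\oracle_{k-1}(X_j)$ lies within $O(\sqrt{\log n})\,\delta_{k-1}\sqrt{k!}$ of $\perm(X)$ for all but an $O(k\eta_{k-1} + 1/\poly(n))$ fraction of inputs. Hence, were $\oracle_k$ not $(\delta_k,\eta_k)$-good with $\delta_k$ a mild polylog multiple of $\delta_{k-1}$, the consistency test would reject on an $\Omega(\eta_k)$ fraction of samples, and $T$ independent trials would detect this with probability $1-1/\poly(n)$. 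The main obstacle is managing the multiplicative degradation of $\delta_k$ and $\eta_k$ across all $n$ inductive levels: the thresholds must be calibrated so that after $n$ compounded union-bound and concentration losses, they still sit at the $\poly(n)$ scale claimed by the theorem. The independence of $(X_1,\ldots,X_k)$ from the first row is essential to this argument, as it prevents an adversarial $\oracle_{k-1}$ from correlating its errors with the random coefficients $x_{1,j}$ to hide behind them — this is precisely what lets downward self-reducibility alone substitute for polynomial self-correction over $\C$.
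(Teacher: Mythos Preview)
Your proposal has the right skeleton---downward self-reducibility plus the independence of the first row from the minors---but it contains a real gap that the paper explicitly identifies and works to overcome.

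You write that in the soundness induction, $\delta_k$ is ``a mild polylog multiple of $\delta_{k-1}$'' coming from Gaussian concentration, and that $\eta_k = O(k\eta_{k-1} + 1/\poly(n))$ from a union bound over the $k$ minors. But these recursions compound multiplicatively across the $n$ levels: after $n$ steps you get $\delta_n \approx (\sqrt{\log n})^{\,n}\,\delta$ and $\eta_n \approx n!\cdot \eta_1$, both of which are hopelessly super-polynomial. You acknowledge that ``the main obstacle is managing the multiplicative degradation \ldots\ across all $n$ inductive levels,'' but you do not actually say how to manage it, and with the pointwise/high-probability accounting you describe there is no way to: the paper itself notes that this na\"{\i}ve argument ``is too weak to conclude anything useful about $\oracle_n$.''

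The paper's fix is to switch from an $\ell_\infty$-style (high-probability) error to an $\ell_2$ (root-mean-square) error. The point is that for independent unit Gaussians $x_{1,j}$ one has the exact second-moment identity
\[
\E_X\Bigl|\textstyle\sum_j x_{1,j}\bigl(\oracle_{k-1}(X_j)-\per_{k-1}(X_j)\bigr)\Bigr|^{2}
= \sum_j \E\bigl|\oracle_{k-1}(X_j)-\per_{k-1}(X_j)\bigr|^{2}
\le k\cdot \eps_{k-1}^{2}(k-1)! = \eps_{k-1}^{2}\,k!,
\]
so the error accumulates \emph{additively} ($\eps_k \le \eps_{k-1} + O(n\delta)$) rather than multiplicatively, yielding $\eps_n \le \poly(n)\,\delta$. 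However, the full $\ell_2$ error is uncontrollable because the oracle may be arbitrary on a set of small measure; this forces the paper to work with a truncated expectation $\E[1_k(X)\,|\oracle_k-\per_k|^2]$ and to introduce a second test, the \tailtest, which together with the fourth-moment tail bound for $\per_k$ (\cref{lem:perm-tail-bound}) ensures that the contribution of the bad set stays at $O(n^2\delta^2 k!)$. Your proposal has neither the $\ell_2$ accounting nor the tail test, and without them the induction does not close.
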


We conduct the test in $n$ stages, one stage for each submatrix
size. Let $k \leq n$ denote a fixed stage, and let $X \in
\C^{k^2}$. Now, using downward self-reducibility~(\cref{eq:self-red}),
we have,
\begin{align}
& \textstyle  \Abs{\oracle_k(X) - \per_k(X)} \le 
  \underbrace{\Babs{\oracle_k(X) -
 {\textstyle \sum_j} x_j \oracle_{k-1}(X_j)}}_{\term{trm:A}} + %\nonumber \\
  \underbrace{\Babs{{\textstyle \sum_j} x_j \Brac{\oracle_{k-1}(X_j) -
 \per_{k-1}(X_j)}}}_{\term{trm:B}}.\label{eq:error-bound}
\end{align}
Recall that $X_j$ is the submatrix formed by removing the first row
and $j^{\textrm{th}}$ column (often referred to as a minor).  

We bound \cref{trm:A} above, by checking if $\oracle_k$ is a linear
function in the variables along the first row ($x_j$ in above), when
the rest of the entries of the matrix are fixed; the coefficients of
the linear function are determined by querying $\oracle_{k-1}$ on the
$k$ minors along the first row. The tolerance needed in the test is
estimated as follows: a good collection of oracles estimates
$\per_{k-1}$ up to $\delta \sqrt{(k - 1)!},$ and $\per_k$ up to
$\delta \sqrt{k!}$ additive error. Further, since the expression is
identically zero for the permanent function, we have:
\begin{align*}
\eqref{trm:A} &\le \Abs{\oracle_k(X) - \per_k(X)}%
 + \Babs{ {\textstyle \sum_j} x_j \Paren{\oracle_{k-1}(X_j)-\per_{k-1}(X_k)}}\\
 &\le \delta \sqrt{k!}%
 + \Babs{{\textstyle \sum_j} x_j \delta \sqrt{(k-1)!}}\le \delta\sqrt{k!} \cdot (1 + O(\sqrt{\log n})) ,
\end{align*}
where the last inequality follows from standard Gaussian tail bounds.

We test this by simply querying the oracles for random $X$ and the
minors obtained thereof and checking if the downward self-reducibility
condition is approximately met.

The second term, \cref{trm:B}, is linear in the error $\oracle_{k-1}$
makes on the minors, say $\eps_{k-1} \sqrt{(k-1)!}$ on each minor. A
naive argument as above says term $(B)$ is at most $\eps_{k-1}
\sqrt{k!} \cdot \Theta(\sqrt{\log n})$. From this and
\cref{eq:error-bound}, the error in $\oracle_k$ is at most a $\Theta(\sqrt{\log n})$
factor times the error in $\oracle_{k-1}$. However, this bound is too
weak to conclude anything useful about $\oracle_n$.

We overcome this issue by measuring the error in a
root-mean-square~(RMS or $\ell_2$) sense as follows:
\[ \err(\oracle_k) = \sqrt{\E_X\Brac{\oracle_k(X) - \per_k(X)}^2} =
\norm{\oracle_k - \per_k}_2 .\]
Now, 
\[ \norm{\oracle_k - \per_k}_2 \le \norm{\oracle_k - {\textstyle
    \sum_j} x_j \oracle_{k-1}(X_j)}_2 + \sqrt{ \E \Big[{{\textstyle
      \sum_j} x_j ( \oracle_{k-1} - \per_{k-1} )}\Big] ^2 }.\] The
first term is still $\delta \sqrt{k!} \cdot O(\sqrt{\log n})$ assuming
the linearity test passes. Since each $x_i$ is an independent standard
Gaussian, the second term is at most $\sqrt{k} \cdot
\err(\oracle_{k-1}) = \eps_{k-1} \cdot \sqrt{k!}$. Then, $
\err(\oracle_k) \le (\delta \sqrt{\log n} + \eps_{k-1}) \cdot
\sqrt{k!} ,$ and thus $\err(\oracle_n)$ is at most $\poly(n) \delta
\sqrt{n!}$ as we set out to prove! The caveat however is that $\err$
as defined cannot be bounded precisely because we necessarily need to
discount a small fraction of the inputs: the oracles could be
returning arbitrary values on a small fraction, outside the purview of
any efficient tester. We deal with this by using a more sophisticated
RMS error that discounts an $\eta$-fraction of the input:
\[ \serr(\oracle_k) = \inf_{S: \mu(S) \le \eta} \sqrt{ \E_X\Brac{ 1_s
 (\oracle_k(X) - \per_k(X))}^2},\]
where $1_S$ denotes the indicator function of the set $S.$
We then use a tail inequality on the permanent based on its fourth
moment to carry through the inductive argument set up above. This
requires a {\em Tail Test} on the oracles to check that the oracles
have a tail similar to the permanent. Our analysis shows that the
Linearity and Tail test we design are sufficient and efficient,
proving~\Cref{thm:main-informal}.

%\ \\
%\noindent 
%{\emph Organization.} 
\vspace{.25em}
\noindent \emph{Organization.}
In the next section, we
set up the notation. \Cref{sec:ptester} describes the test we design
and follows it up with its analysis.

\section{Preliminaries}
\paragraph{Notation and Setup.}
  We deal with complex valued functions
on the space of square matrices over the complex numbers, $\C^{k
  \times k}$ for some integer $k$. We assume $\C^{k \times k}$ is
endowed with the standard Gaussian measure $\cgaussian.$ We use the
notation $\Pr_X[E]$ to denote the probability of an event $E,$ when $X
\sim \cgaussian.$ We denote by $\E_X[Y]$ to denote the expectation of
the random variable $Y,$ when $X \sim \cgaussian.$

Functions from $\C^d$ to $\{0, 1\}$ are called indicator functions
(since they indicate inclusion in the set of points where the
function's value is $1$). We denote the indicator function for a
predicate $q(X)$ by $\I[q(X)]$ and define it to be $1$ when $q(X)$ is
true and $0$ otherwise. 
For example, $\I[\abs{x} \ge 2]$ is $1$ for all
$x$ whose magnitude is at least $2,$ and $0$ otherwise.

\paragraph{Error and $\ell_2$ norm of Oracles.}
 The (standard)
$\ell_2$ norm of a square-integrable function $f: \C^d \to \C$ is
denoted by $\norm{f}_2$ and is equal to $\E_X [\Abs{f}^2],$ where $X
\sim \mathcal{N}(0,1)_{\C}^{d}.$  
\ignore{
We also
define a variant, $\etanorm{\cdot}{\eta}$ as follows:
\begin{equation}\label{eq:eta-norm-defn}
\etanorm{f}{\eta} \defeq \inf_{S: \mu(S) \ge 1 - \eta} \sqrt{ \E_X
  \Brac{ 1_S f^2(X) }}.
\end{equation}
Strictly speaking, $\etanorm{\cdot}{\eta}$ is not a norm since it does
not satisfy sub-additivity.  However, the following approximation to
it will suffice for our purposes:
\begin{equation}\label{eq:eta-norm-subadditive}
\etanorm{f+g}{\eta_1 + \eta_2} \le \etanorm{f}{\eta_1} +
\etanorm{f}{\eta_2}.
\end{equation}}
An oracle for the permanent is simply a function $\oracle_k:
\C^{k\times k} \to \C$ that can be queried in a single time unit.  We
will work with a sequence of oracles \stack, one for every dimension
$k$ less than $n.$ 
\ignore{In the notation set up above, the RMS error of an
oracle, $\oracle_k$, while discounting an $\eta$-fraction is:
\begin{equation}\label{eq:rms-sophisticated}
\serr(\oracle_k) \defeq \etanorm{\oracle_k - \per_k}{\eta}.
\end{equation}}

\paragraph{Moments of Permanents.}
 The first and the second moments of
the permanent under the Gaussian distribution on $k\times k$ matrices
are easy to compute:
$\E_X[\Perm_k(X)] = 0,\ \E_X[|\Perm_k(X)|^2] = k!.$ We also know the
fourth moment of the permanent function for Gaussian matrices,
$\E_X[|\Perm_k(X)|^4] = (k+1)(k!)^2$ (Lemma
56, \cite{AaronsonA11}). This fact and Markov's
inequality immediately imply:
\begin{lemma}[Tail Bound for Permanent] 
\label{lem:perm-tail-bound}
For every positive integer $k,$ the permanent satisfies
$\Pr_X[|\Perm_k(X)| > T\sqrt{k!}] \le \nfrac{(k+1)}{T^4}.$
\end{lemma}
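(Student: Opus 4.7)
The plan is to apply Markov's inequality to the nonnegative random variable $\abs{\Perm_k(X)}^4$, using the fourth-moment identity $\E_X[\abs{\Perm_k(X)}^4] = (k+1)(k!)^2$ cited just above the statement from \cite{AaronsonA11}. Since the claimed bound has $T^4$ in the denominator and $(k+1)$ in the numerator, this is exactly the shape one obtains from a single application of Markov on the fourth power; higher-moment or Chebyshev variants would give weaker conclusions, so I would not bother with them.

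Concretely, the first step is to rewrite the event $\{\abs{\Perm_k(X)} > T\sqrt{k!}\}$ as $\{\abs{\Perm_k(X)}^4 > T^4 (k!)^2\}$; this is valid because both sides are nonnegative and the map $t \mapsto t^4$ is monotone on $[0,\infty)$. Second, I apply Markov's inequality to the nonnegative variable $\abs{\Perm_k(X)}^4$ at the threshold $T^4 (k!)^2$, which yields
\[
\Pr_X\bigl[\abs{\Perm_k(X)}^4 > T^4 (k!)^2\bigr] \;\le\; \frac{\E_X[\abs{\Perm_k(X)}^4]}{T^4 (k!)^2}.
\]
Third, I substitute the fourth-moment identity into the numerator, obtaining
\[
\frac{(k+1)(k!)^2}{T^4 (k!)^2} \;=\; \frac{k+1}{T^4},
\]
which is exactly the stated bound.

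There is no real obstacle here: the only nontrivial ingredient is the closed-form fourth moment of the permanent of a complex Gaussian matrix, which is imported as Lemma 56 of \cite{AaronsonA11}, and the rest is a one-line application of Markov. The lemma is recorded precisely because the downstream argument (the \emph{Tail Test} alluded to in the overview) wants a concrete polynomial-in-$k$, polynomial-in-$T$ tail bound for the permanent to compare the oracle's tail against, and this $T^{-4}$ decay is what the fourth moment naturally delivers.
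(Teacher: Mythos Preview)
Your proposal is correct and matches the paper's own argument exactly: the paper simply states that the fourth-moment identity $\E_X[\abs{\Perm_k(X)}^4] = (k+1)(k!)^2$ together with Markov's inequality immediately gives the bound, which is precisely what you wrote out.
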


\section{Testing Approximate Permanent Oracles}
\label{sec:ptester}

Our testing procedure, \ptest, has three parameters: a positive
integer $n$, the dimension of the matrices being tested; $\delta \in
(0,1]$, the amount of error allowed; and $c \in (0,1]$, a completeness
parameter\footnote{We require the mild condition that $n =
  \Omega\left(\sqrt{\log \frac{1}{\delta c}}\right),$ which is
  satisfied for large enough $n$ when $c,\delta =
  \frac{1}{\poly(n)}.$}. In addition, it has query access to the sequence
of oracles, \stack being tested. In the following, for a matrix $X,$
we denote the entries in the first row of $X$ by
$x_{11},\ldots,x_{1k},$ and by $X_i$ the minor obtained by removing
the first row and the $i^{\textrm{th}}$ column from $X.$ (There will
be no confusion since we will only be working with expansion along the
first row.)

The guarantees of the tester are twofold: it accepts with probability
at least $1-c,$ if, for every $k$, and every $X \in \C^{k\times k},$ we
have $\Abs{\oracle_k(X) - \per_k(X)}^2 \le \delta^2 k!$; on
the other hand, the tester almost always rejects if for some $k \le
n$, $\oracle_k(X)$ is not $\poly(n)\delta\cdot \sqrt{k!}$ close to
$\per_k(X)$ with probability $1-\frac{1}{\poly(n)}$ over $X$ (see
below for precise theorems).  The query complexity of \ptest is
bounded by $\poly (n, \nicefrac{1}{\delta}, \nicefrac{1}{c}).$
Assuming that each oracle query takes constant time, the time
complexity of \ptest is also bounded by $\poly(n,
\nicefrac{1}{\delta}, \nicefrac{1}{c})$ (see below for precise
bounds).%TODO

The test consists of two parts: The first is a \emph{linearity} test,
that tests that the oracles \stack satisfy $\oracle_k(X) \approx
\sum_i x_{1i}\oracle_{k-1}(X_i)$ (observe that the permanent satisfies
this exactly). The second part is a \emph{tail} test, that tests that
the function does not take large values too often (the permanent
satisfies this property too, as shown by
Lemma~\ref{lem:perm-tail-bound}).

\begin{quote}
{\bf \lintest$(n,k,\delta)$}: 
Sample a $k \times k$ matrix $X \sim \cgaussian.$ If $k=1,$ output
\reject unless $\left|\oracle_k(X) - X\right|^2 \le n^2\cdot
\delta^2.$ Else, test if:
\[ \Babs{\oracle_k(X) - \sum_{i=1}^k x_{1i} \oracle_{k-1}(X_i)}^2 \le
n^2 \delta^2\cdot k!.\] Output \reject if it does not hold.
\end{quote}

\begin{quote}
{\bf \tailtest$(k,T)$}: Sample a $k \times k$ matrix $X.$ Test that $|f_{k}(X)|^2
  \le T^2 k!.$ Output \reject if it does not hold. 
\end{quote}
\begin{figure*}[htb]
\begin{tabularx}{\textwidth}{|X|}
\hline
\vspace{-1mm}
{\bf Parameters}: A positive integer $n \in \N$, error parameter
$\delta \in (0,1]$, and completeness parameter $c \in (0,1]$.\\ 
{\bf Requires}: Oracle access to \stack, where $\oracle_k: \C^{k\times
k} \to \C$.
\begin{enumerate}
\item Set the following variables: $T = \nicefrac{4n}{\delta\sqrt{c}},\ d = \nicefrac{192n^2}{\delta^4c}.$
\item For each $1 \le k \le n$, 
  \begin{enumerate}
    \item Run \lintest$(n,k,\delta)$ $d$ times.
    \item Run \tailtest$(k,T)$ $d$ times.
  \end{enumerate}
\item If none of the above tests output \reject, output \texttt{Accept}.
\end{enumerate}
\\
\hline
\end{tabularx}
  \caption{The tester \ptest}
  \label{fig:ptest}
\end{figure*}
\noindent The procedure \ptest is formally defined in
Figure~\ref{fig:ptest}. 
In
the rest of the paper, we prove the following theorem about \ptest.
\begin{theorem}[Main Theorem]\label{thm:main}
  For all $n \in \N, \delta \in (0,1],$ and $c \in (0,1],$ satisfying $n
  = \Omega\left(\sqrt{\log \frac{1}{c\delta}}\right),$ given oracle
  access to \stack, where $\oracle_k : \C^{k\times k} \to \C,$ the
  procedure \ptest satisfies the following:
\begin{enumerate}
\item {\bf (Completeness)} If, for every $k \leq n,$ and every $X \in
  \C^{k \times k}$, $\Abs{\oracle_k(X) - \per_k(X)}^2 \le \delta^2 k!$, then
\ptest accepts with probability at least $1-c$.
\item {\bf (Soundness)} For every $1 \le k \le n$, either

\begin{quote}
  There exists an indicator function $1_k : \C^{k\times k} \to \{0,1\}$
  satisfying $\E_X[1_k(X)] \ge 1 - \frac{\delta^4c}{64n},$ such that,
  $\E_X [ 1_k(X)\cdot |\oracle_k(X) - \per_k(X)|^2 ] \le (2nk\delta)^2 k!.$
\end{quote}

or else,
\begin{quote}
 \ptest outputs \emph{\reject}\!\! with probability at least
$1-e^{-n}.$
\end{quote}
\item {\bf (Complexity)} The total number of queries made by \ptest is
  $O(n^4 \delta^{-4} c^{-1}).$ Moreover, assuming that each oracle
  query takes constant time, the time required by \ptest is also
  $O(n^4 \delta^{-4} c^{-1}).$
\end{enumerate}
\end{theorem}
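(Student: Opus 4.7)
The plan is to prove the three clauses in turn. \emph{Completeness} is essentially a concentration argument: when $\lvert\oracle_k(X) - \per_k(X)\rvert^2 \le \delta^2 k!$ pointwise, the residual tested by \lintest equals $(\oracle_k(X) - \per_k(X)) - \sum_i x_{1i}(\oracle_{k-1}(X_i) - \per_{k-1}(X_i))$, whose squared magnitude is at most $(1 + \sum_i \lvert x_{1i}\rvert^2)\delta^2 k!$ by Cauchy--Schwarz, and Gaussian concentration of $\sum_i \lvert x_{1i}\rvert^2$ makes this exceed $n^2\delta^2 k!$ only with probability $e^{-\Omega(n^2)}$. For \tailtest, the triangle inequality $\lvert\oracle_k\rvert \le \lvert\per_k\rvert + \delta\sqrt{k!}$ reduces a failure to $\lvert\per_k\rvert \gtrsim T\sqrt{k!}$, which by \Cref{lem:perm-tail-bound} has probability $O((k+1)/T^4)$. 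The choices $T = 4n/(\delta\sqrt c)$ and $d = 192 n^2/(\delta^4 c)$ are tuned so that a union bound over all $O(nd)$ invocations gives total failure probability at most $c$. \emph{Complexity} follows by direct inspection: $n$ stages, $2d$ tests per stage, $O(k)\le O(n)$ queries per test, for a total of $O(n^4/(\delta^4 c))$.

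\emph{Soundness} is where the real work lies, and I would argue it contrapositively by induction on $k$. If \ptest does not \reject with probability at least $1 - e^{-n}$, then by a Chernoff bound applied to the $d$ independent trials at each level, every single-trial failure probability satisfies $p_{\mathrm{lin},k}, p_{\mathrm{tail},k} \le \alpha \defeq \delta^4 c / (192 n)$. Define $1_k(X) = \I\!\left[\text{\lintest and \tailtest both pass at } X\right]$, so that $\E[1 - 1_k] \le 2\alpha \le \delta^4 c/(64 n)$. The base case $k=1$ is immediate from the \lintest condition. For the inductive step, I would write
\[ \oracle_k - \per_k \;=\; (\oracle_k - L_k) \;+\; \textstyle\sum_i x_{1i}\Delta_{k-1}(X_i), \qquad L_k \;=\; \sum_i x_{1i}\oracle_{k-1}(X_i),\quad \Delta_{k-1} \;=\; \oracle_{k-1} - \per_{k-1}, \]
and apply the $L^2$-triangle inequality. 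The \lintest condition in $1_k$ bounds the first summand by $n\delta\sqrt{k!}$. For the second, split each $\Delta_{k-1}(X_i) = 1_{k-1}(X_i)\Delta_{k-1}(X_i) + (1-1_{k-1}(X_i))\Delta_{k-1}(X_i)$; because each $x_{1i}$ is a mean-zero complex Gaussian independent of every minor $X_j$ (no row-1 entry appears in any minor), cross terms in the squared $L^2$-norm vanish, and the ``good'' portion contributes $\sqrt{k\,\E[1_{k-1}|\Delta_{k-1}|^2]} \le 2n(k-1)\delta\sqrt{k!}$ by the inductive hypothesis.

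The main obstacle, which I expect to be the most delicate step, is bounding the ``bad'' portion $\sqrt{k\,\E[(1-1_{k-1})|\Delta_{k-1}|^2]}$ by $O(n\delta\sqrt{k!})$, so that the combined estimate matches the target $2nk\delta\sqrt{k!}$. Using $|\Delta_{k-1}|^2 \le 2|\oracle_{k-1}|^2 + 2|\per_{k-1}|^2$, the permanent contribution yields to Cauchy--Schwarz with $\E[|\per_{k-1}|^4] = k\cdot((k-1)!)^2$ and the measure bound $\E[1-1_{k-1}]\le \delta^4 c/(64n)$, giving only $O(\delta^2 (k-1)!)$. The oracle contribution I would split on the tail cutoff $\{|\oracle_{k-1}|^2 \le T^2 (k-1)!\}$: inside, the contribution is at most $T^2 (k-1)!\cdot\E[1 - 1_{k-1}] = O(n\delta^2 (k-1)!)$ with the calibrated $T$. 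The delicate piece is its complement, where \tailtest only supplies a measure bound and $\lvert\oracle_{k-1}\rvert$ is a priori unbounded; I plan to handle this either by strengthening the inductive hypothesis to control $\E[1_{k-1}|\oracle_{k-1}|^2]$ explicitly (so that the bad-set value of $\oracle_{k-1}$ gets controlled by comparison to $\per_{k-1}$), or by augmenting $1_k$ to require tail passage on each minor $X_i$ and absorbing the resulting $O(k\alpha)$ measure loss by tightening $\alpha$. Once this bound is in hand, adding the linearity residual and the good and bad portions closes the induction with $\|1_k(\oracle_k - \per_k)\|_2 \le 2nk\delta\sqrt{k!}$.
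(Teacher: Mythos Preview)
Your completeness and complexity arguments are essentially the paper's. For soundness, your inductive scaffold is right, but the ``delicate piece'' you flag is a genuine gap, and neither of your proposed fixes cleanly closes it. Fix \#1 is a non-starter: controlling $\E[1_{k-1}\lvert\oracle_{k-1}\rvert^2]$ tells you nothing about $\E[(1-1_{k-1})\lvert\oracle_{k-1}\rvert^2]$, which is the quantity you need and which can be infinite since the oracle is arbitrary where \tailtest fails. Fix \#2 (folding the minors' tail conditions into $1_k$) can be made to work, but the measure loss is $O(k\cdot n/d)$ rather than $O(n/d)$, so you would need to enlarge $d$ by a factor of $n$ and would overshoot the stated $O(n^4\delta^{-4}c^{-1})$ complexity.

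The paper avoids the issue with two changes to your setup. First, it augments the indicator at level $k$ with a \emph{permanent} tail condition, setting $1_k = 1_k^{LIN}\wedge 1_k^{TAIL}\wedge 1_k^{PERM}$ where $1_k^{PERM}(X)=\I[\lvert\per_k(X)\rvert^2\le T^2 k!]$; this costs only $(k+1)/T^4$ in measure by \Cref{lem:perm-tail-bound}, so the bound $\E[1_k]\ge 1-\delta^4 c/(64n)$ still holds. Second, it does \emph{not} split $\Delta_{k-1}(X_i)$ inside the sum; instead it splits the whole quantity $1_k(X)\lvert\oracle_k-\per_k\rvert^2$ at the top level according to whether $\prod_i 1_{k-1}(X_i)=1$. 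On the good event the argument is yours. On the bad event the paper never looks at the minors at all: because $1_k$ now forces both $\lvert\oracle_k(X)\rvert\le T\sqrt{k!}$ and $\lvert\per_k(X)\rvert\le T\sqrt{k!}$, one has the \emph{pointwise} bound $1_k(X)\lvert\oracle_k(X)-\per_k(X)\rvert^2\le 4T^2 k!$, and multiplying by $\Pr[\exists i:1_{k-1}(X_i)=0]\le k\cdot\delta^4 c/(64n)$ gives $\le n^2\delta^2 k!$. This sidesteps entirely the unboundedness of $\oracle_{k-1}$ on its bad set.
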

\noindent The three parts of the theorem are proved separately in
Theorem~\ref{thm:ptester:completeness},
Theorem~\ref{thm:ptester:soundness} and
Theorem~\ref{thm:ptester:complexity} in Sections~\ref{sec:completeness},
\ref{sec:soundness} and \ref{sec:complexity} respectively.
{
\begin{remark}
  Observe that, assuming both $\nicefrac{1}{c}$ and
  $\nicefrac{1}{\delta}$ are polynomial in $n,$ the query complexity
  is $\poly(n),$ and hence, even if the oracles $\{\oracle_k\}_{k \le
    n}$ satisfy $|\oracle_k(X) - \per_k(X)|^2 \le \delta^2k!$ only with probability
  $1-\frac{1}{\poly(n)},$ \ptest would still accept with probability
  $1-c-\frac{1}{\poly(n)}.$
\end{remark}

\begin{remark} Observe that the (informal) main
  theorem~(Theorem~\ref{thm:main-informal}) stated in the introduction
  follows from Theorem~\ref{thm:main} from a simple Markov argument.
  Given $\delta = \Omega(\nfrac{1}{\poly(n)})$, set $c =
  \frac{1}{\poly(n)}$ and note that the completeness follows directly
  from Theorem~\ref{thm:main} and the previous remark.  Further, from
  the {\em Soundness} claim of Theorem~\ref{thm:main}, we
  have an indicator function $1_k : \C^{k\times k} \to \{0,1\}$
  satisfying $\E_X[1_k(X)] \ge 1 - \frac{\delta^4c}{64n} \ge 1 -
  \frac{1}{\poly(n)},$ such that, $\E_X [ 1_k(X)\cdot |\oracle_k(X) -
  \per_k(X)|^2 ] \le (2nk\delta)^2 k!  \le \poly(n)\cdot \delta^2 k!.$
 Applying Markov's inequality, we have that
$ \Pr \Brac{ 1_k(X)\cdot \Abs{\oracle_k(X) - \per_k(X)}^2 \ge \poly(n)
  \delta^2 k! } \le \nfrac{1}{\poly(n)}. $
Now, note that $1_k$ is an indicator function, and $\Pr [ 1_k(X) = 0
]$ is at most $1/\poly(n)$. This, along with the previous expression gives
that the tester outputs \reject if the sequence of oracles is not even $(\poly(n)
\cdot \delta, \nfrac{1}{\poly(n)})$-good.
\end{remark}
}

\subsection{Completeness}
\label{sec:completeness}
We first prove the completeness of \ptest: that a $(\delta, 0)$-good
sequence of oracles is accepted with probability at least $1 - c$.
\begin{theorem}[Completeness] 
\label{thm:ptester:completeness}
If, for every $k \leq n,$ and every $X \in \C^{k\times k}$,
$\Abs{\oracle_k(X) - \per_k(X)}^2 \le \delta^2 k!$, then the procedure
\ptest accepts with probability at least $1-c$.
\end{theorem}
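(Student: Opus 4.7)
The plan is to establish completeness by showing that under the pointwise hypothesis $\abs{\oracle_k(X)-\per_k(X)}\le\delta\sqrt{k!}$ (for every $X\in\C^{k\times k}$ and every $k\le n$), every single invocation of \lintest and \tailtest rejects with probability small enough that a union bound over all $2nd$ calls stays below $c$.

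For \lintest at level $k\ge 2$, I would use the downward self-reducibility identity $\per_k(X)=\sum_i x_{1i}\per_{k-1}(X_i)$ to rewrite the tested quantity as
\[
\oracle_k(X)-\sum_i x_{1i}\oracle_{k-1}(X_i)=\bigl(\oracle_k(X)-\per_k(X)\bigr)+\sum_i x_{1i}\bigl(\per_{k-1}(X_i)-\oracle_{k-1}(X_i)\bigr).
\]
The triangle inequality and the hypothesis bound its modulus by $\delta\sqrt{k!}+\delta\sqrt{(k-1)!}\sum_i \abs{x_{1i}}$. Cauchy--Schwarz applied to $\sum_i 1\cdot\abs{x_{1i}}$, together with $\sqrt{k!}=\sqrt{k}\cdot\sqrt{(k-1)!}$, collapses this to $\delta\sqrt{k!}\bigl(1+\sqrt{\sum_i \abs{x_{1i}}^2}\bigr)$, so a rejection forces $\sum_i \abs{x_{1i}}^2>(n-1)^2$. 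Since $\abs{x_{1i}}^2\sim\mathrm{Exp}(1)$ are i.i.d.\ and $k\le n$, the Chernoff bound $\Pr\bigl[\sum_i\abs{x_{1i}}^2>t\bigr]\le 2^k e^{-t/2}$ (obtained by taking $\lambda=1/2$ in the MGF of $\mathrm{Gamma}(k,1)$) gives single-test failure probability at most $2^n e^{-(n-1)^2/2}=e^{-\Omega(n^2)}$. The $k=1$ case is trivial: the hypothesis directly yields $\abs{\oracle_1(X)-X}\le\delta\le n\delta$.

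For \tailtest at level $k$, the hypothesis gives $\abs{\oracle_k(X)}\le\abs{\per_k(X)}+\delta\sqrt{k!}$, so a rejection forces $\abs{\per_k(X)}>(T-\delta)\sqrt{k!}$; by Lemma~\ref{lem:perm-tail-bound} this event has probability at most $(k+1)/(T-\delta)^4$. Since $T=4n/(\delta\sqrt{c})\ge 4$ dominates $\delta\le 1$, $(T-\delta)^4$ is within a constant factor of $T^4=256 n^4/(\delta^4c^2)$, so each single tail test fails with probability $O(\delta^4 c^2/n^3)$.

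Finally, I would union-bound over the $nd$ invocations of each test: the tail-test contribution is $nd\cdot O(\delta^4c^2/n^3)=O(c)$, and the linearity-test contribution is $nd\cdot e^{-\Omega(n^2)}$, which falls below $c/2$ exactly when $n^2=\Omega(\log(n/(\delta c)))$---matching the hypothesis $n=\Omega(\sqrt{\log(1/(c\delta))})$. The only mildly delicate point is pinning down the constants in the Gamma Chernoff estimate and in $T$ so that both budgets sum to at most $c$ with the specific $d$ and $T$ of Figure~\ref{fig:ptest}; after that the argument is a routine union-bound calculation.
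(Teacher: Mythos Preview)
Your argument is correct and follows the same overall skeleton as the paper: bound the rejection probability of a single \lintest and a single \tailtest, then union-bound over all $2nd$ calls. The \tailtest analysis and the final union bound are essentially identical to the paper's.

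The one genuine difference is in how you control the \lintest failure probability. You bound $\bigl|\sum_i x_{1i}(\per_{k-1}(X_i)-\oracle_{k-1}(X_i))\bigr|$ termwise, apply Cauchy--Schwarz, and then use a Chernoff-type tail on $\sum_i |x_{1i}|^2$ (a $\mathrm{Gamma}(k,1)$ variable), obtaining $2^n e^{-(n-1)^2/2}$. The paper instead conditions on the last $k-1$ rows of $X$ (which fix all the minors $X_i$ and hence all the differences $\per_{k-1}(X_i)-\oracle_{k-1}(X_i)$) and observes that, under this conditioning, the sum is itself a complex Gaussian with variance at most $\delta^2 k!$; a direct Gaussian tail bound then gives $2e^{-(n-1)^2/2}$, with no $2^n$ prefactor. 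Both bounds are $e^{-\Omega(n^2)}$, so either suffices once $n=\Omega(\sqrt{\log(1/(c\delta))})$. The paper's route is slightly cleaner and yields a sharper constant; your route is more elementary in that it never invokes the ``linear combination of Gaussians is Gaussian'' fact, only subexponential tails of $|x_{1i}|^2$, and would survive if the first-row entries were merely sub-Gaussian rather than exactly Gaussian.
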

\begin{proof}
  Suppose we are given a sequence of oracles $\{\oracle_k\}_{k \le n}$ such that for all $k \le n,$
  we have that $|\oracle_k(X) - \per_k(X) |^2 \le \delta^2 \cdot k!.$ Let
  $X$ denote a randomly sampled $k \times k$ matrix.

  We first bound the probability that the oracles \stack fail a linearity
  test. For $k=1,$ it is easy to see that
  \lintest$(n,1,\delta)$ never outputs \reject
  upon querying $\oracle_1.$ For larger $k,$ we have the following
  lemma that shows that $\oracle_k(X) \approx \sum_i
  x_{1i}\oracle_{k-1}(X_i),$ and hence {\lintest} outputs \reject only
  with small probability.
\begin{lemma}[Completeness for \lintest]
\label{lem:ptester:completeness:small-error}
For every $2 \le k \le n,$ the oracles \stack satisfy
\[\Pr_X[|\oracle_k(X) - \sum_i x_{1i}
\oracle_{k-1}(X_i)|^2 > n^2\delta^2k!] \le 2e^{-\frac{(n-1)^2}{2}}.\] 
\end{lemma}
\noindent We first complete a proof of the theorem assuming this
lemma. This lemma implies that every call to \lintest$(n,k,\delta)$ outputs \reject
with probability at most $2e^{-\frac{(n-1)^2}{2}}.$

Next, we bound the probability that the
oracles \stack fail a {\tailtest}. Using the tail bound for the
permanent given by Lemma~\ref{lem:perm-tail-bound}, we get,
$\Pr_X[|\per_k(X)| > (T-\delta)\sqrt{k!}] \le
\nfrac{(k+1)}{(T-\delta)^4}.$
Since $|\oracle_k(X) - \per_k(X) | \le \delta \cdot \sqrt{k!},$ we use it
in the above bound to get 
$\Pr_X[|\oracle_k(X)| > T\sqrt{k!}] \le
\nfrac{(k+1)}{(T-\delta)^4}.$
Thus, every call to {\tailtest} fails with probability at
most $\frac{(n+1)}{(T-\delta)^4}$. 

Now applying a union bound, we get that for $n$ that is
$\Omega\left(\sqrt{\log \frac{1}{\delta c}}\right)$, \ptest outputs
\reject with probability at most
\[\left(2e^{-\frac{(n-1)^2}{2}} +
  \frac{(n+1)}{(T-\delta)^4}\right)dn \le 384\frac{n^3}{\delta^4c}\cdot e^{-\nfrac{(n-1)^2}{2}} +
\frac{192(n+1)n^3c}{(4n - \delta^2\sqrt{c})^4} \le c.\]
\end{proof}
We now give a proof of Lemma~\ref{lem:ptester:completeness:small-error}.
\begin{proof} (\emph{of Lemma~\ref{lem:ptester:completeness:small-error}}).
We have,
\begin{align}
\left|\oracle_k(X) - \sum_i x_{1i} \oracle_{k-1}(X_i)\right| & \le \left|\oracle_k(X) - \per_k(X)\right| +
\left|\sum_i x_{1i} \per_{k-1}(X_i) - \sum_i x_{1i} \oracle_{k-1}(X_i)\right| \nonumber \\
& \le \delta\sqrt{k!} + \left|\sum_i x_{1i}(\per_{k-1}(X_i)- \oracle_{k-1}(X_i))\right|. \label{eq:ptester:completeness-error}
\end{align}
Now, since $x_{11},\ldots,x_{1k}$ are independent Gaussians with unit
variance, $\sum_i x_{1i}(\per_{k-1}(X_i)- \oracle_{k-1}(X_i))$ is a Gaussian with
variance $\sum_i |\per_{k-1}(X_i)- \oracle_{k-1}(X_i)|^2 \le k\cdot \delta^2
\cdot (k-1)! = \delta^2\cdot k!.$ Thus, the second term in
Equation~\eqref{eq:ptester:completeness-error} is bounded by
$(n-1)\delta\cdot \sqrt{k!},$ except with probability at most
$2e^{-\frac{(n-1)^2}{2}}.$ Thus, $|\oracle_k(X) - \sum_i x_{1i} \oracle_{k-1}(X_i)|
\le n\delta\cdot \sqrt{k!},$ except with probability at most
$2e^{-\frac{(n-1)^2}{2}}.$
\end{proof}

\subsection{Soundness} 
\label{sec:soundness}

The interesting part of the analysis is the soundness for \ptest,
which we prove in this section.  Given \stack, we need to define the following
indicator functions to aid our analysis:

\begin{align}
  1^{LIN}_k (X) &= \begin{cases} \I[(\oracle_k(X) - X)^2 \le
  n^2\delta^2 ], & \mbox{if } k=1\\  \I[(\oracle_k(X) - \sum_i x_{1i} \oracle_{k-1}(X_i))^2 \le
  n^2\delta^2k! ], & \mbox{if }2 \le k \le n \end{cases}
\nonumber \\
  1^{TAIL}_k (X) &= \I[\oracle_k(X)^2 \le T^2\cdot k! ],\nonumber \\
  1^{PERM}_k (X) &= \I[\per_k(X)^2 \le T^2 \cdot k! ], \nonumber \\
  1_k(X) &= 1^{LIN}_k(X) \wedge 1^{TAIL}_k(X) \wedge 1^{PERM}_k(X).
\label{eq:1k-def}
\end{align}
We now prove the following theorem.
\begin{theorem}[Soundness] 
\label{thm:ptester:soundness}
Let the indicator function $1_k$ be as
  defined by Equation~\eqref{eq:1k-def}. For every $k \leq n$, either
  both of the following two conditions hold: 
\begin{enumerate}
\item The indicator $1_k$ satisfies $\E_X[1_k(X)] \ge 1 -
  \frac{\delta^4c}{64n}.$
\item The oracle $\oracle_k$ and the indicator $1_k$ satisfy $\E_X [ 1_k(X)\cdot |\oracle_k(X) - \per_k(X)|^2 ]
  \le (2nk\delta)^2 k!,$
\end{enumerate}
or else, 
\ptest outputs \emph{\reject}\!\! with probability at least $1-e^{-n}.$
 \end{theorem}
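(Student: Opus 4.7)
The plan is to prove the statement by induction on $k$, using downward self-reducibility to push error bounds from level $k-1$ to $k$ whenever the linearity and tail tests at level $k$ are passed. For the base case $k=1$, $\per_1(X)=X$, and a Chernoff argument over the $d=192n^2/(\delta^4c)$ repetitions establishes the dichotomy: either $\Pr[1_1^{LIN}(X)=0]$ or $\Pr[1_1^{TAIL}(X)=0]$ exceeds the threshold $p_0:=\tfrac{\delta^4c}{192n}$, in which case the corresponding batch triggers \reject with probability $\geq 1-e^{-n}$, or both are below $p_0$, in which case a union bound (using $\Pr[1_1^{PERM}=0]\leq p_0$ from \Cref{lem:perm-tail-bound} and the choice $T=4n/(\delta\sqrt c)$) yields $\E[1_1]\geq 1-\tfrac{\delta^4c}{64n}$, with the error bound holding pointwise on $\{1_1=1\}$ from $|\oracle_1(X)-X|^2\leq n^2\delta^2$.

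For the inductive step at level $k$, assume conditions~1 and~2 hold at $k-1$ (otherwise \ptest has already rejected with probability $\geq 1-e^{-n}$ and we are done). The same Chernoff dichotomy at level $k$ either produces rejection or yields $\Pr[1_k^{LIN}=0],\Pr[1_k^{TAIL}=0]\leq p_0$. Combined with $\Pr[1_k^{PERM}=0]\leq(k+1)/T^4\leq p_0$ from \Cref{lem:perm-tail-bound} and the choice of $T$, a union bound establishes condition~1 at $k$.

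For condition~2, set $L_k(X):=\sum_{i=1}^k x_{1i}\oracle_{k-1}(X_i)$ and $e_i:=\oracle_{k-1}(X_i)-\per_{k-1}(X_i)$, so that downward self-reducibility gives $L_k-\per_k=\sum_i x_{1i}e_i$. The $L_2$-triangle inequality yields
\[ \sqrt{\E[1_k(\oracle_k-\per_k)^2]}\leq\sqrt{\E[1_k(\oracle_k-L_k)^2]}+\sqrt{\E[1_k(L_k-\per_k)^2]}, \]
and the first summand is at most $n\delta\sqrt{k!}$ since $1_k\leq 1_k^{LIN}$. For the second, introduce the event $G:=\bigcap_{i=1}^k\{1_{k-1}(X_i)=1\}$. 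On $G$, all minors satisfy the inductive error bound, and since the row-$1$ entries $x_{11},\dots,x_{1k}$ are independent of each other and of the minors (with $\E[x_{1i}\overline{x_{1j}}]$ equal to $1$ when $i=j$ and $0$ otherwise),
\[ \E[\I[G]\,|L_k-\per_k|^2]\leq\sum_{i=1}^k\E[1_{k-1}(X_i)|e_i|^2]\leq k\cdot(2n(k-1)\delta)^2(k-1)!. \]
On $\bar G$, a union bound together with condition~1 at $k-1$ gives $\Pr[\bar G]\leq k\cdot\tfrac{\delta^4c}{64n}$, while on $\{1_k=1\}$ the three indicators combine pointwise to give $|L_k-\per_k|^2\leq(2T+n\delta)^2k!$; substituting $T$ shows the $\bar G$-contribution to the squared expectation is $O(nk\delta^2k!)$. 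Summing and using $k\leq n$ yields condition~2.

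The main obstacle is controlling the contribution from $\bar G$: the inductive error bound at $k-1$ is valid only on $\{1_{k-1}=1\}$, and $\oracle_{k-1}$ may be unbounded on its complement, so naively dropping indicators in the $L_2$ estimate would lose all control. The resolution combines two ingredients --- use the current-level tail tests together with \Cref{lem:perm-tail-bound} to extract a crude pointwise bound on $1_k\cdot|L_k-\per_k|$ via $\oracle_k$ and $\per_k$, and use the inductive measure bound on $1_{k-1}$ to ensure $\Pr[\bar G]$ is small enough that this crude bound integrates to only a lower-order contribution.
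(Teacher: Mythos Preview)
Your proposal is correct and follows essentially the same approach as the paper: both use the Chernoff-type dichotomy on the $d$ repetitions to establish condition~1, and both prove condition~2 by induction via downward self-reducibility, splitting according to whether all minors satisfy $1_{k-1}(X_i)=1$ and using the tail indicators to control the bad event $\bar G$. The only cosmetic difference is the order of decomposition --- the paper splits $1_k(\oracle_k-\per_k)$ by $G$ versus $\bar G$ first and then applies the triangle inequality on the good part (so on $\bar G$ it bounds $|\oracle_k-\per_k|\le 2T\sqrt{k!}$ directly from $1_k^{TAIL}\wedge 1_k^{PERM}$), whereas you apply the triangle inequality first and then split $1_k(L_k-\per_k)$ by $G$ (so on $\bar G$ you bound $|L_k-\per_k|\le (2T+n\delta)\sqrt{k!}$ via all three indicators); the resulting constants differ slightly but both close to give $(2nk\delta)^2 k!$.
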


 \begin{proof} We first prove the following lemma that shows that for all $k \le n,$ the expectation of $1_k$ is large.
\begin{lemma}[Large Expectation of $1_k$]
\label{lem:ptester:1k-complement}
Either, for every $k,$ the indicator function $1_k$ satisfies $\E_X[1_k(X)] \ge 1 - \frac{\delta^4c}{64n},$ or else, \ptest outputs \emph{\reject}\!\! with probability at least $1-e^{-n}.$
\end{lemma}
\noindent The first part of the theorem follows immediately from this lemma. The proof of this lemma is given later in this section.

For the second part of the theorem, we prove the following inductive
claim about the oracles $\{\oracle_k\}.$
\begin{lemma} (Main Induction Lemma) Suppose that for some $2 \le k \le n,$ we have, 
\label{lem:ptester:soundness:induction}
\begin{dmath*}
\E_{X \in \C^{(k-1)\times (k-1)}} [ 1_{k-1}(X)\cdot |\oracle_{k-1}(X) - \per_{k-1}(X)|^2) ] \le \eps^2_{k-1} (k-1)!,
\end{dmath*}
then, either we have,
\begin{dmath*}
\E_{X \in \C^{k\times k}}  [ 1_k(X)\cdot |\oracle_{k}(X) - \per_{k}(X)|^2) ] \le (\eps_{k-1} + 2n\delta)^2 k!,
\end{dmath*}
or else, \ptest outputs \emph{\reject}\!\! with probability at least $1-e^{-n}.$
\end{lemma}
\noindent The proof of this lemma is also presented later in this
section. Assuming this lemma, we can complete the proof of soundness
for \ptest.

For the second part of the theorem, we first show that the required bound holds for
$k=1.$ We know that for any $X \in \C,$ whenever $1_{1}(X) = 1,$ we
have $|\oracle_{1}(X)-X|^2 \le n^2 \delta^2.$ Thus,
\[\E_X [1_{1}(X)\cdot |\oracle_1(X)-\per_1(X)|^2] \le
\E_X[1_{1}^{LIN}(X)\cdot |\oracle_1(X)-X|^2] \le n^2\delta^2 < (2n\delta)^2\cdot 1!.\]
This gives us our base case. Assume that there is a $2 \le j \le n$ such that,
\[\E_{X \in \C^{(j-1)\times (j-1)}} [ 1_{j-1}(X)\cdot |\oracle_{j-1}(X) - \per_{j-1}(X)|^2 ] \le (2n(j-1)\delta)^2\cdot (j-1)!.\]
\noindent  Now, we use Lemma~\ref{lem:ptester:soundness:induction} to deduce that either,
\[\E_{X \in \C^{j\times j}} [ 1_{j}(X)\cdot |\oracle_{j}(X) - \per_{j}(X)|^2 ] \le (2nj\delta)^2\cdot j!,\]
or else, \ptest outputs \reject with probability at least $1-e^{-n}.$ Thus, by induction, either for every $ k \le n,$
\[\E_{X} [ 1_{k}(X)\cdot |\oracle_{k}(X) - \per_{k}(X)|^2] \le (2nk\delta)^2\cdot k!,\]
or else,  \ptest outputs \reject with probability at least $1-e^{-n}.$
This completes the proof of the theorem.
\end{proof}

\paragraph{Large expectation of $1_k$.}
We now prove Lemma~\ref{lem:ptester:1k-complement} that states that
the expectation of $1_k$ is large.

\begin{proof} (\emph{of Lemma~\ref{lem:ptester:1k-complement}}).
 We begin by making several claims about the structure the oracles
 \stack must have with high probability, assuming that \ptest
 accepts. First, we claim that 
 $\oracle_1$ must be close to the identity function.
\begin{claim}[Soundness of {\lintest} for $\oracle_1$]
  Either the oracle $\oracle_1$ satisfies that
\label{claim:ptester:soundness:1lin1}
\begin{align}
\label{eq:ptester:level1-lin-error}
\Pr_X\left[\left|\oracle_1(X) -
       X\right|^2 > n^2\delta^2\right] \le \frac{n}{d},
\end{align}
 or else, \ptest outputs \emph{\reject}\!\! with probability at least
 $1-e^{-n}.$
\end{claim}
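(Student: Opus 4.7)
The plan is a straightforward contrapositive / amplification argument exploiting the fact that \lintest$(n,1,\delta)$ is repeated $d$ times independently inside \ptest.

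First I would write $p \defeq \Pr_X[|\oracle_1(X) - X|^2 > n^2\delta^2]$ and observe that, by the definition of \lintest for the case $k=1,$ a single invocation of \lintest$(n,1,\delta)$ draws a fresh sample $X \sim \mathcal{N}(0,1)_{\C}$ and outputs \reject precisely when $|\oracle_1(X) - X|^2 > n^2\delta^2.$ Thus each of the $d$ independent invocations of \lintest$(n,1,\delta)$ inside \ptest rejects independently with probability exactly $p.$ In particular, the probability that \emph{all} $d$ of these invocations fail to reject is $(1-p)^d.$

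Next, I would assume the negation of \eqref{eq:ptester:level1-lin-error}, namely $p > n/d,$ and bound $(1-p)^d$ using the standard inequality $1 - p \le e^{-p},$ giving $(1-p)^d \le e^{-pd} < e^{-n}.$ Since \ptest outputs \reject whenever any of its sub-tests rejects, the probability that \ptest rejects is at least the probability that at least one of the $d$ invocations of \lintest$(n,1,\delta)$ rejects, which is $1 - (1-p)^d \ge 1 - e^{-n}.$ This is exactly the conclusion of the claim in the alternative case.

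There is no real obstacle here: the only thing to be careful about is that the $d$ invocations of \lintest$(n,1,\delta)$ use independent fresh samples (this is how \ptest is defined in Figure~\ref{fig:ptest}), so the failure events are i.i.d.\ and the product bound applies. The same template will be reused verbatim for the analogous soundness claims at levels $k \ge 2$ and for \tailtest, since $d = 192n^2/(\delta^4 c)$ is chosen precisely so that a failure probability exceeding the threshold ($n/d$ in the present case) is amplified to $1 - e^{-n}$ over $d$ independent trials.
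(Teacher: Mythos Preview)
Your proposal is correct and matches the paper's own proof essentially verbatim: the paper also assumes the negation of \eqref{eq:ptester:level1-lin-error}, notes that each of the $d$ independent calls to \lintest$(n,1,\delta)$ rejects with probability at least $n/d$, and bounds the probability that none of them rejects by $(1-n/d)^d \le e^{-n}$. The only cosmetic difference is that you work with the exact rejection probability $p$ while the paper plugs in the threshold $n/d$ directly.
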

\noindent A proof of this claim is included later in the section for
completeness. We also need the following two claims stating that for
every $2 \le k \le n,$ $\oracle_k(X) \approx \sum_i x_{1i}
\oracle_{k-1}(X_i)$ very often, and that $\oracle_k(X)$ does not take
large values too often.
\begin{claim}[Soundness of {\lintest}]
  Either the oracles $\{\oracle_k\}$ satisfy the following for every $2
  \le k \le n$,
\begin{align*}
\Pr_X\left[{\textstyle \left|\oracle_k(X) - \sum_i x_{1i}
\oracle_{k-1}(X)\right|^2 > n^2\delta^2k!} \right] \le \frac{n}{d},
\end{align*}
or else, \ptest outputs \emph{\reject}\!\! with probability at least $1-e^{-n}.$
\end{claim}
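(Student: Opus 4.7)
The plan is a straightforward amplification/independence argument, essentially the same as for Claim~\ref{claim:ptester:soundness:1lin1} but lifted to the levels $k \ge 2.$ First I would, for each $2 \le k \le n,$ define
\[ p_k \defeq \Pr_X\Brac{\Babs{\oracle_k(X) - {\textstyle\sum_i x_{1i}\oracle_{k-1}(X_i)}}^2 > n^2\delta^2 k!}, \]
which by construction is exactly the probability that a single invocation of \lintest$(n,k,\delta)$ outputs \reject in its $k \ge 2$ branch. Since \ptest runs \lintest$(n,k,\delta)$ on $d$ independently sampled matrices at stage $k,$ the probability that all $d$ of these runs accept is precisely $(1-p_k)^d.$

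Next I would argue the contrapositive of the desired dichotomy. Assuming the first alternative fails, there is some $2 \le k \le n$ with $p_k > n/d.$ Using the standard inequality $1 - x \le e^{-x}$ together with the fact that the choice $d = 192n^2/(\delta^4 c)$ made by \ptest satisfies $d \ge n$ (since $c,\delta \in (0,1]$ and $n \ge 1,$ whence $n/d \le 1$), I would bound
\[(1 - p_k)^d < (1 - n/d)^d \le e^{-n}.\]
Therefore the probability that every one of the $d$ linearity tests at stage $k$ accepts is at most $e^{-n},$ so \ptest outputs \reject with probability at least $1 - e^{-n}.$ Observe that a single violating $k$ already forces overall rejection with overwhelming probability, so no union bound over $k$ is needed in this direction, which keeps the bound clean.

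I do not anticipate any real obstacle here. The one piece of bookkeeping worth double-checking is that the event inside the definition of $p_k$ matches, verbatim, the predicate tested in the $k \ge 2$ branch of \lintest$(n,k,\delta),$ so that the $d$-fold independent sampling amplifies the per-trial rejection probability exactly as claimed — which it does by inspection of the test. The same template (indeed, the same $d$-fold amplification) will reappear when proving the analogous soundness claim for \tailtest.
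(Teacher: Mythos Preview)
Your proposal is correct and matches the paper's own proof essentially verbatim: the paper also argues the contrapositive, identifies the per-trial rejection probability with the predicate tested by \lintest$(n,k,\delta)$, and bounds the probability that all $d$ independent runs at the bad level $k$ accept by $(1-n/d)^d \le e^{-n}$. Your added check that $d \ge n$ (so $n/d \le 1$) is a small piece of hygiene the paper leaves implicit.
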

\begin{claim}[Soundness of \tailtest]
  Either the oracles $\{\oracle_k\}$ satisfy the following for every $k \le
  n$,
\begin{align*}
\Pr_X\left[ |\oracle_k(X)|^2 >
    T^2\cdot k! \right] \le \frac{n}{d},
\end{align*}
or else, \ptest outputs \emph{\reject}\!\! with probability at least $1-e^{-n}.$
\end{claim}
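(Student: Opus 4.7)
The plan is to prove this by contrapositive: assume there exists some $k \le n$ such that $p_k := \Pr_X[|\oracle_k(X)|^2 > T^2 \cdot k!] > n/d$, and show that \ptest outputs \reject with probability at least $1-e^{-n}$.

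Fix such a $k$. By the description of \ptest in Figure~\ref{fig:ptest}, the tester runs \tailtest$(k,T)$ exactly $d$ times at this level, and each invocation samples a fresh matrix $X \sim \cgaussian$ independently. A single invocation outputs \reject precisely when $|\oracle_k(X)|^2 > T^2 \cdot k!$, so each of the $d$ runs independently rejects with probability exactly $p_k > n/d$. Therefore the probability that all $d$ runs fail to reject is at most
\[ (1 - p_k)^d < \left(1 - \frac{n}{d}\right)^d \le e^{-n}, \]
using the standard inequality $1 - x \le e^{-x}$. Since \ptest outputs \reject whenever any single invocation of any subtest outputs \reject, we conclude that \ptest outputs \reject with probability at least $1 - e^{-n}$, as claimed.

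There is no real obstacle here: this is a routine amplification argument, entirely analogous to the proof of the preceding Soundness claim for \lintest. The point of the choice $d = 192 n^2 / (\delta^4 c)$ in \ptest is precisely to make the threshold $n/d$ small enough that any violation on a non-negligible fraction of inputs is amplified to an $e^{-n}$ failure probability via the bound $(1 - n/d)^d \le e^{-n}$. The same template (independence of the $d$ samples, followed by $(1-p)^d \le e^{-pd}$) handles the two preceding soundness claims for the linearity test as well, so this proof can be presented in parallel with those.
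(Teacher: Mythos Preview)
Your proof is correct and follows essentially the same approach as the paper: argue by contrapositive, fix a violating $k$, and use independence of the $d$ calls to \tailtest$(k,T)$ together with $(1-n/d)^d \le e^{-n}$ to conclude that \ptest rejects with probability at least $1-e^{-n}$. The paper's proof is slightly terser but otherwise identical.
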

\noindent The proofs of these claims are very similar to that of
Claim~\ref{claim:ptester:soundness:1lin1} and we skip them. We can
restate the above claims in terms of $1_k^{LIN}$ and $1_k^{TAIL}$ defined
in Equation~\eqref{eq:1k-def} as follows: Either, for every $k \le n,$ 
\begin{align}
\label{eq:ptester:soundness:1k-components}
\E_X[1_k^{LIN}(X)] \ge 1 -\frac{n}{d},\ \E_X[1_k^{TAIL}(X)] \ge 1 -\frac{n}{d},
\end{align}
\noindent or else, \ptest will output \reject with probability at least $1 - e^{-n}.$

From Lemma~\ref{lem:perm-tail-bound}, we know that the
permanent does not take large values too often. To be precise,
\[\Pr_X[|\per_k(X)|^2 > T^2\cdot k!] \le
\frac{(k+1)}{T^4}.\]
Again, this implies that $\E_X[1_k^{PERM}] \ge 1 - \frac{(k+1)}{T^4}.$
Combining these three claims, we can now prove our lemma.

We know that $1_k = 1_k^{LIN} \wedge 1_k^{TAIL} \wedge 1_k^{PERM}.$ We
know that if either of the claims in
Equation~\eqref{eq:ptester:soundness:1k-components} does not hold, \ptest outputs \reject with probability at least $1-e^{-n}.$ Thus, we assume that both the claims in Equation~\eqref{eq:ptester:soundness:1k-components} hold and apply the union bound to get,
\begin{align*}
\E_X[1_k(X)] & \ge 1 - \E_X[1-1_k^{LIN}(X)] - \E_X[1-1_k^{TAIL}(X)]- \E_X[1-1_k^{PERM}(X)] \\
& \ge 1 - \frac{n}{d} -\frac{n}{d}- \frac{k+1}{T^4} \ge
1-\frac{\delta^4c}{96 n} - \frac{(n+1)\delta^4c^2}{256n^{4}} \ge
1-\frac{\delta^4c}{64n},
\end{align*}
for large enough $n.$
\end{proof}

\paragraph{Main Induction Lemma.} We now give a proof of the main induction lemma.
\begin{proof} (\emph{of Lemma~\ref{lem:ptester:soundness:induction}}).
  Recall that $X_i$ is the minor obtained by deleting the first row
  and the $i^{\textrm{th}}$ column from $X.$  We first split the
  probability space for $X \in \C^{k\times k}$ according to whether
  all of its minors $X_i$ satisfy $1_{k-1}(X_i) = 1$ or not.
\begin{equation}\label{eq:ptester:split-by-minors}
\begin{aligned}
\norm{ 1_k(X) (\oracle_k(X) - \per_k(X))}^2 &=  \overbrace{\norm{1_k(X) \prod_i 1_{k-1}(X_i) (\oracle_k(X) -  \per_k(X))}^2}^{\term{trm:good}} \\
&+ \underbrace{\norm { 1_k(X) (1-\prod_i 1_{k-1}(X_i)) (\oracle_k(X) - \per_k(X))}^2}_{\term{trm:bad}}
\end{aligned}
\end{equation}
Let $\tilde{1}_k(X) = 1_k(X) \prod_i 1_{k-1}(X_i).$ \Cref{trm:good},
above, is bounded by adding and subtracting the expression $\sum_i
x_{1i} \oracle_{k-1}(X_i)$ and then expanding the permanent along the
first row. 
\begin{equation}\label{eq:ptester:row-expansion}
\begin{aligned}
   \norm{\tilde{1}_k(X) (\oracle_k(X) - \per_k(X) ) } &\le 
   \underbrace{\norm{\tilde{1}_k(X) [ \oracle_k(X) - \sum_i x_{1i}
     \oracle_{k-1}(X_i) ]}}_{\term{trm:lin}}\\
   &+ \underbrace{\norm{ \tilde{1}_k(X) [\sum_i x_{1i}\oracle_{k-1}(X_i) - \sum_i
     x_{1i}\per_{k-1}(X_i)]}}_{\term{trm:exp}}
  \end{aligned}
\end{equation}
We know that for all $X$ such that $1_k(X) = 1,$ $|\oracle_k(X) -
\sum_i x_{1i}\oracle_{k-1}(X_i)|^2$ is bounded by $n^2 \delta^2 k!.$
Thus, \cref{trm:lin} in eq.~\eqref{eq:ptester:row-expansion} is at most
$n \delta \sqrt{k!}.$
\begin{align}
\eqref{trm:lin} \le %
  \norm{1_k(X) (\oracle_k(X) - \sum_i x_{1i}\oracle_{k-1}(X_i))}%
    \le n\delta \sqrt{k!}  \label{eq:ptester:sq-error:first-term}
\end{align}
\Cref{trm:exp} is bounded by using the induction assumption:
\begin{equation}
\begin{aligned}
\eqref{trm:exp}^2 = \BNorm{1_k(X) &\prod_i 1_{k-1}(X_i) \BBrac{\sum_i x_{1i}\oracle_{k-1}(X_i) - \sum_i
  x_{1i}\per_{k-1}(X_i)}}^2  \\
&    \le \E_{X_1,\ldots X_k} \E_{x_{11}, \ldots ,x_{1k}}
  \left[\prod_i 1_{k-1}(X_i)\cdot \left|\sum_i x_{1i}\oracle_{k-1}(X_i) - \sum_i x_{1i}\per_{k-1}(X_i)
  \right|^2\right]  \\
  &  \le \E_{X_1,\ldots X_k}\left[\prod_i 1_{k-1}(X_i)\cdot \sum_i |\oracle_{k-1}(X_i) -
  \per_{k-1}(X_i)|^2\right] \\
  &   \le \sum_i \E_{X_i}\left[ 1_{k-1}(X_i) \cdot |\oracle_{k-1}(X_i) -
    \per_{k-1}(X_i)|^2\right] \\
  &   \le k \eps_{k-1}^2 (k-1)! = \eps_{k-1}^2 k!
\end{aligned}\label{eq:ptester:sq-error:second-term}
\end{equation}
Combining
eqs.~\eqref{eq:ptester:row-expansion}, \eqref{eq:ptester:sq-error:first-term},
and~\eqref{eq:ptester:sq-error:second-term}, we get,
\begin{equation}
 \eqref{trm:good} = \E_X\left[1_k(X) \prod_i 1_{k-1}(X_i)\cdot  \left|\oracle_k(X) - \per_k(X)\right|^2 \right] \le (\eps_{k-1}+n\delta)^2\cdot k!.
\label{eq:ptester:sq-error:third-term}
\end{equation}
Next, we bound \cref{trm:bad} as follows.  First use
\cref{lem:ptester:1k-complement} to deduce $\Pr_X [ 1_{k-1}(X_i) =
0]\le \frac{\delta^4c}{64n}$ (If it does not hold, we know that \ptest
outputs \reject with probability at least $1-e^{-n}$). Whenever
$1_k(X) = 1,$ we have $|\oracle_k(X)| \le T\sqrt{k!}$ and $|\per_k(X)|
\le T\sqrt{k!}.$ This implies that $1_k(X)\cdot |\oracle_k(X) -
\per_k(X)|^2 \le 4T^2 k! $ everywhere. Thus, we have,
\begin{equation}\begin{aligned}
\eqref{trm:bad} &= \norm { 1_k(X) (1-\prod_i 1_{k-1}(X_i))
  (\oracle_k(X) - \per_k(X))}^2 \le 4T^2 k! \E_X\left[1 - \prod_i
1_{k-1}(X_i)\right]\\
&\le 4T^2 k! \E_X \left[\sum_i (1 - 1_{k-1}(X_i) )\right]\\
& \le 4T^2 k! \cdot k \cdot \frac{\delta^4c}{64n} \le n^2\delta^2\cdot k!.  \label{eq:ptester:sq-error:fourth-term}
\end{aligned}\end{equation}
Combining eqs.~\eqref{eq:ptester:split-by-minors}, \eqref{eq:ptester:sq-error:third-term},
 and~\eqref{eq:ptester:sq-error:fourth-term} completes the proof:
\[ \E \left[1_k(X)\cdot |\oracle_k(X) - \per_k(X)|^2\right] \le \left(\left(\eps_{k-1}
    + n\delta\right)^2 + n^2 \delta^2\right)\cdot k! \le
\left(\eps_{k-1} + 2n\delta\right)^2\cdot k!. \qedhere\]
\end{proof}

\paragraph{Proof of Claim~\ref{claim:ptester:soundness:1lin1}}
For completeness, we include a proof of Claim~\ref{claim:ptester:soundness:1lin1}.
\begin{proof} 
  Assume that the oracle $\oracle_1$ does not satisfy
  Equation~\eqref{eq:ptester:level1-lin-error}. We know that
  $\left|\oracle_1(X) - X\right|^2 > n^2\delta^2$ iff
  \lintest$(n,1,\delta)$ outputs \reject when
  $X$ is sampled by the procedure. Thus, the measure of points $X
  \in \C$ that would fail the test
  \lintest$(n,1,\delta)$ is at least $\nfrac{n}{d}.$
  This implies that the probability that none of the $d$ calls to
  \lintest$(n,1,\delta)$ made by \ptest output
  \reject is at most $\left(1-\nfrac{n}{d}\right)^d \le
  e^{-n}.$
\end{proof}

\subsection{Complexity}
\label{sec:complexity}
We finally note that the complexity of \ptest is polynomially
bounded in the input parameters.
\begin{theorem}[Query and Time Complexity] 
\label{thm:ptester:complexity}
The total number of queries
  made by \ptest to all the oracles is $O(n^2d) = O(n^4 \delta^{-4}
  c^{-1}).$ Moreover, assuming that each oracle query takes constant
  time, the time required by \ptest is also $O(n^2d) = O(n^4 \delta^{-4}
  c^{-1}).$
\end{theorem}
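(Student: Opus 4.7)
The plan is a straightforward accounting of the work done by \ptest, based on the pseudocode in Figure~\ref{fig:ptest}. First I would count the queries made by a single invocation of each subroutine. A call to \tailtest$(k, T)$ samples one matrix $X \in \C^{k \times k}$ and reads $\oracle_k(X)$, so it makes exactly $1$ oracle query. A call to \lintest$(n, k, \delta)$ for $k \ge 2$ samples one matrix $X$ and then queries $\oracle_k(X)$ together with $\oracle_{k-1}(X_i)$ for each of the $k$ minors $X_1, \ldots, X_k$, for a total of $k+1$ queries; for $k=1$, it makes only $1$ query.

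Next I would sum over the main loop. For each level $1 \le k \le n$, the tester performs $d$ invocations of \lintest$(n, k, \delta)$ and $d$ invocations of \tailtest$(k, T)$, contributing at most $d \cdot (k+1) + d \cdot 1 \le d(n+2)$ oracle queries. Summing over the $n$ levels yields a total query count of at most $n \cdot d(n+2) = O(n^2 d)$. Substituting $d = 192 n^2 / (\delta^4 c)$ gives $O(n^4 \delta^{-4} c^{-1})$, as claimed.

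Finally, I would observe that, apart from the oracle queries, each invocation of \lintest and \tailtest performs only $O(k) \le O(n)$ arithmetic operations: sampling Gaussian entries for $X$, forming the linear combination $\sum_i x_{1i} \oracle_{k-1}(X_i)$, and comparing magnitudes against the appropriate thresholds $n^2 \delta^2 k!$ or $T^2 k!$. Assuming unit cost per oracle query and per arithmetic operation, the total running time is dominated (up to constants) by the query count, giving a time bound of $O(n^2 d) = O(n^4 \delta^{-4} c^{-1})$.

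There is no real obstacle here; the proof is purely a bookkeeping exercise, and the only care needed is to confirm that \lintest at level $k$ reads $\oracle_{k-1}$ on all $k$ minors simultaneously so that a single call contributes $O(k)$ rather than $O(1)$ queries, which is why the final bound is $O(n^2 d)$ and not $O(n d)$.
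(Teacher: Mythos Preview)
Your proposal is correct and follows essentially the same argument as the paper: count $k+1$ queries per \lintest\ call (one for $k=1$), one query per \tailtest\ call, sum over the $dn$ invocations at each of the $n$ levels to get $O(n^2 d)$, and substitute $d = 192 n^2/(\delta^4 c)$. The paper's proof is slightly terser but identical in substance, including the assertion that each \lintest\ call costs $O(k)$ time.
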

\begin{proof}
  By the definition of \ptest, it makes $dn$ calls to {\lintest} and
  $dn$ calls to {\tailtest}. Each call to {\lintest} with parameters
  $n,k,\delta,$ makes at most $k+1$ queries to the oracles (for $k=1,$
  it makes only one query), and requires $O(k)$ time. Each call to
  {\tailtest} makes 1 query and requires $O(1)$ time.  Thus, the total
  number of queries made is $O(dn^2) = O(n^4 \delta^{-4} c^{-1}),$ and
  the total time required is also $O(dn^2) = O(n^4 \delta^{-4}
  c^{-1}).$
\end{proof}
\noindent Thus, if $\nfrac{1}{\delta}$ and $\nfrac{1}{c}$ are $\poly(n),$
the query complexity of \ptest is also $\poly(n).$

\ignore{
\subsection{Testing Simulator for BQC}

Finally, we combine the tester with the reduction in AA to the
permanent function to obtain a tester for a simulator $\BPP^\NP$.  

If $\norm{D(A) - D_\oracle(A)} \le \beta$, for every circuit
description $A$, then there is a $\BPP^{\NP^\oracle}$ 
}

%%% Local Variables: 
%%% mode: latex
%%% TeX-master: "reals"
%%% End: 

\ignore{\section{Conclusion} 

We  study  and  design  a  tester for  oracles  that  approximate  the
permanent  of matrices  from the  gaussian ensemble.   A next  step in
understanding the complexity of the bosonic quantum system would be to
design a tester  directly for the simulator of  such a quantum system.
Perhaps our techniques might prove useful.}

\section{Acknowledgments}

The authors  would like to  thank Madhur Tulsiani and  Rishi Saket for
extensive discussions during early stages of this work.  We would also
like  to  thank  Scott Aaronson, Alex Arkhipov,  Swastik Kopparty  and
Srikanth Srinivasan for helpful discussions.

\bibliographystyle{alpha}
\bibliography{papers}

\appendix
\section{Remaining proofs}
\begin{claim}[Soundness of {\lintest}]
  Either the oracles $\{\oracle_k\}$ satisfy the following for every $2
  \le k \le n$,
\begin{align}
\label{eq:ptester:level-k-lin-error}
\Pr_X\left[{\textstyle \left|\oracle_k(X) - \sum_i x_{1i}
\oracle_{k-1}(X)\right|^2 > n^2\delta^2k!} \right] \le \frac{n}{d},
\end{align}
or else, \ptest outputs \emph{\reject}\!\! with probability at least $1-e^{-n}.$
\end{claim}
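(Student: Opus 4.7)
The plan is to mirror the argument used for Claim~\ref{claim:ptester:soundness:1lin1}, essentially verbatim, with the only change being that we now work at level $k$ rather than level $1$. The underlying reason is that the structure of \lintest at level $k \ge 2$ is identical to the structure at level $1$: the test samples a fresh Gaussian matrix $X \sim \cgaussian$ independently for each call, and rejects precisely when the linearity deviation $|\oracle_k(X) - \sum_i x_{1i}\oracle_{k-1}(X_i)|^2$ exceeds the threshold $n^2\delta^2 k!$.

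Fix $2 \le k \le n$ and suppose the first alternative fails, i.e.,
\[\Pr_X\left[\Babs{\oracle_k(X) - \sum_{i=1}^k x_{1i}\oracle_{k-1}(X_i)}^2 > n^2\delta^2 k!\right] > \frac{n}{d}.\]
Then a single invocation of \lintest$(n,k,\delta)$ outputs \reject with probability strictly greater than $n/d$, since this event is exactly the rejection condition built into \lintest at level $k \ge 2$. The procedure \ptest makes $d$ independent calls to \lintest$(n,k,\delta)$ at each level; each call draws its own random matrix $X$, so the events of interest are independent across calls.

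Therefore, the probability that none of these $d$ independent calls rejects is at most
\[\Paren{1 - \frac{n}{d}}^d \le e^{-n}.\]
Equivalently, \ptest (which outputs \accept only when no internal test rejects) itself rejects with probability at least $1 - e^{-n}$, yielding the second alternative. There is no serious obstacle here: the argument is purely a probability-amplification step exploiting the $d$-fold repetition of \lintest, and the choice $d = \Theta(n^2/(\delta^4 c))$ is sufficiently large that $(1 - n/d)^d \le e^{-n}$ holds.
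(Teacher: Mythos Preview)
Your proposal is correct and follows essentially the same argument as the paper: assume the bound fails at some level $k$, observe that each of the $d$ independent calls to \lintest$(n,k,\delta)$ rejects with probability exceeding $n/d$, and conclude that the probability no call rejects is at most $(1-n/d)^d \le e^{-n}$.
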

\begin{proof}
  Assume that there exists a $k,$ such that $2 \le k \le n$ and the
  oracle $\oracle_k$ does not satisfy
  Equation~\eqref{eq:ptester:level-k-lin-error}. We recall that
  \lintest$(n,k,\delta)$ outputs \reject iff
  the sampled $X \in \C^{k\times k}$ satisfies $\left|\oracle_k(X) - \sum_i x_{1i}
    \oracle_{k-1}(X)\right|^2 > n^2\delta^2k!.$ Thus, a randomly sampled $X$
  will fail \lintest$(n,k,\delta)$ with probability at
  least $\nfrac{n}{d}.$ This implies that the probability that none of
  the $d$ calls made by
  \ptest  to \lintest$(n,k,\delta)$  output \reject is at most
  $\left(1-\nfrac{n}{d}\right)^n \le e^{-n}.$
\end{proof}
%\noindent We next show that none of the oracles $\{\oracle_k\}$ take
%large values too often.
\begin{claim}[Soundness of \tailtest]
  Either the oracles $\{\oracle_k\}$ satisfy the following for every $k \le
  n$,
\begin{align}
\label{eq:ptester:tail-bound}
\Pr_X\left[ |\oracle_k(X)|^2 >
    T^2\cdot k! \right] \le \frac{n}{d},
\end{align}
or else, \ptest outputs \emph{\reject}\!\! with probability at least $1-e^{-n}.$
\end{claim}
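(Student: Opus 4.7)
The plan is to argue the contrapositive, following exactly the template already used for \cref{claim:ptester:soundness:1lin1} and the analogous soundness claim for \lintest at levels $k \ge 2$. Suppose the conclusion on the oracles fails, i.e., there exists some $k \le n$ with
\[
\Pr_X\!\left[|\oracle_k(X)|^2 > T^2 \cdot k!\right] > \frac{n}{d}.
\]
By the very definition of \tailtest$(k,T)$, a single invocation on a fresh sample $X \sim \cgaussian$ outputs \reject precisely when $|\oracle_k(X)|^2 > T^2 k!$. Hence each independent call to \tailtest$(k,T)$ outputs \reject with probability strictly greater than $n/d$.

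Next I would note that \ptest runs \tailtest$(k,T)$ a total of $d$ times for this particular value of $k$, using independent fresh samples. The probability that \emph{all} $d$ of these calls accept is at most
\[
\left(1 - \tfrac{n}{d}\right)^{d} \le e^{-n},
\]
by the standard inequality $(1-x)^{1/x} \le e^{-1}$. Since \ptest outputs \accept only when no subroutine call outputs \reject, the probability that \ptest accepts is at most $e^{-n}$, which is the desired conclusion.

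There is no real obstacle here: the argument is purely a counting/amplification step, identical in spirit to the proof of \cref{claim:ptester:soundness:1lin1}. The only minor bookkeeping point is that the bound $n/d$ in the claim is chosen precisely so that $d$ repetitions yield failure probability $e^{-n}$; this is why the parameter $d$ was set to include the factor $n$ in the numerator in the definition at the start of \ptest. No structural properties of the permanent or of the Gaussian ensemble are needed for this claim, as it is merely a statement about amplifying a single-trial rejection probability via independent repetitions.
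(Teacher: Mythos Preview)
Your proposal is correct and essentially identical to the paper's own proof: both argue the contrapositive, observing that if some $k$ violates \eqref{eq:ptester:tail-bound} then each of the $d$ independent calls to \tailtest$(k,T)$ rejects with probability greater than $n/d$, so all of them accept with probability at most $(1-n/d)^d \le e^{-n}$.
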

\begin{proof}
Suppose for some $k \le n,$ the oracle $\oracle_k$ does not satisfy
Equation~\ref{eq:ptester:tail-bound}. Thus, for this choice of $k,$
the test \tailtest$(k,T)$ fails with probability at least
$\nfrac{n}{d}.$ This implies that the probability that at least one of
the $d$ calls by \ptest to \tailtest$(k,T)$ outputs
\reject with probability at least $1 - (1-\nfrac{n}{d})^d \ge 1-e^{-n}.$
\end{proof}

%%% Local Variables: 
%%% mode: latex
%%% TeX-master: "reals"
%%% End: 

\end{document}